\newcommand{\R}{\mathbb{R}}
\DeclareMathOperator*{\E}{\mathbb{E}}
\newtheorem{theorem}{Theorem}
\newtheorem{lemma}[theorem]{Lemma}
\newtheorem{claim}[theorem]{Claim}
\newtheorem{definition}[theorem]{Definition}
\newcommand{\eps}{\epsilon}
\algnewcommand{\LineComment}[1]{\State \(\triangleright\) \textit{#1}}
\algnewcommand{\Annotation}[1]{\State \textcolor{blue}{#1}}
\newcommand{\OPT}{\mathrm{OPT}}
\begin{document}

\title{Submodular Maximization with Nearly-optimal\\ Approximation and Adaptivity in Nearly-linear Time}
\author{
Alina Ene\thanks{Department of Computer Science, Boston University, {\tt aene@bu.edu}. Supported in part by NSF CAREER award 1750333 and NSF CCF award 1718342.}
\and
Huy L. Nguy\~{\^{e}}n\thanks{College of Computer and Information Science, Northeastern University, {\tt hlnguyen@cs.princeton.edu}. Supported in part by NSF CAREER award 1750716.} 
}

\date{}
\maketitle

\begin{abstract}
In this paper, we study the tradeoff between the approximation guarantee and adaptivity for the problem of maximizing a monotone submodular function subject to a cardinality constraint. The adaptivity of an algorithm is the number of sequential rounds of queries it makes to the evaluation oracle of the function, where in every round the algorithm is allowed to make polynomially-many parallel queries. Adaptivity is an important consideration in settings where the objective function is estimated using samples and in applications where adaptivity is the main running time bottleneck. Previous algorithms achieving a nearly-optimal $1 - 1/e - \eps$ approximation require $\Omega(n)$ rounds of adaptivity. In this work, we give the first algorithm that achieves a $1 - 1/e - \eps$ approximation using $O(\ln{n} / \eps^2)$ rounds of adaptivity. The number of function evaluations and additional running time of the algorithm are $O(n \; \mathrm{poly}(\log{n}, 1/\eps))$.
\end{abstract}

\section{Introduction}

The general problem of maximizing a monotone submodular function subject to a size constraint captures many problems of interest both in theory and in practice, including sensor placement, clustering, and influence maximization in social networks. This problem has received considerable attention over the past few decades.  The classical work of Nemhauser, Wolsey, and Fischer \cite{Nemhauser1978} showed that a very natural Greedy algorithm achieves a $1 - 1/e$ approximation for the problem, and this approximation is known to be optimal~\cite{Feige1998,Vondrak2009,Dobzinski2012}. The ensuing decades have led to the development of powerful algorithmic frameworks as well as new applications in areas such as machine learning and data mining.

The Greedy algorithm and its variants play a central role in these developments: Greedy algorithms are natural and simple to use and they achieve the best known approximation guarantees in many settings of interest. The main drawback of Greedy algorithms is that they are inherently sequential and their decisions are intrinsically adaptive.

A recent line of work has focused on addressing the first drawback of Greedy algorithms, and it has led to the development of distributed algorithms for submodular maximization problems in parallel models of computation such as MapReduce~\cite{Kumar2013,Mirzasoleiman2013,MirrokniZ15,Barbosa2015,Mirzasoleiman2015distributed,Barbosa2016,epasto2017bicriteria}. The main focus of these works is on parallelizing sequential algorithms such as Greedy and its variants in order to achieve tradeoffs between the approximation guarantee and resources such as the number of rounds of MapReduce computation and the total amount of communication. In particular, Barbosa \emph{et al.} \cite{Barbosa2016} show that it is possible to achieve a nearly-optimal $1 - 1/e - \eps$ approximation using $O(1/\eps)$ MapReduce rounds. The algorithms developed in these works run a Greedy algorithm on each of the machines, and thus they are just as adaptive as the sequential algorithms. 

Very recently, Balkanski and Singer \cite{BS18} initiated the study of the following question: can we design algorithms for submodular maximization that are less adaptive? The \emph{adaptivity} of an algorithm is the number of sequential rounds of queries it makes to the evaluation oracle of the function, where in every round the algorithm is allowed to make polynomially-many parallel queries:

\begin{definition}[\cite{BS18}]
Given an oracle $f$, an algorithm is {\bf $r$-adaptive} if every query $q$ to the oracle $f$ occurs at a round $i \in [r]$ such that $q$ is independent of the answers $f(q')$ to all other queries $q'$ at round $i$.
\end{definition}

Adaptivity is an important consideration in settings where the objective function is estimated using samples and in applications where adaptivity is the main running time bottleneck. In applications of submodular maximization such as influence maximization and experimental design, queries are experiments that take time and can benefit greatly from parallel execution. In the broader context of optimization and computation, a lot of effort has been devoted to studying the tradeoff of adaptivity and other resources. For example, in property testing, adaptivity has been shown to be crucial with huge gaps in query complexity between non-adaptive and adaptive algorithms and more generally algorithms with different number of adaptive rounds \cite{RS06,CG17}. In compressed sensing, adaptive algorithms can have exponentially fewer measurements than non-adaptive ones (see e.g. \cite{IPW11}). We refer the reader to \cite{BS18} for a more detailed discussion of applications of submodular maximization and the importance of adaptivity in their contexts as well as the study of adaptivity in various areas.

Balkanski and Singer give an algorithm that achieves a $1/3 - \eps$ approximation using $O(\log{n}/\eps^2)$ rounds of adaptivity, and they show that $\Omega(\log{n} / \log\log{n})$ rounds of adaptivity are needed in order to obtain a $1 / \log{n}$ approximation. 

Thus there are now two incomparable algorithms for submodular maximization: the classical Greedy with optimal $1-1/e$ approximation but $O(k)$ adaptivity and the algorithm of \cite{BS18} with $O(\log n/\eps^2)$ adaptivity but $1/3-\eps$ approximation. One cannot help but ask 

\begin{center}
\emph{Is there an inherent tradeoff between adaptivity and approximation?} 
\end{center}

In this work, we obtain an algorithm that is the best of both worlds with nearly optimal approximation $1-1/e-\eps$ and $O(\log{n}/\eps^2)$ adaptivity and $O(n \; \mathrm{poly}(\log{n}, 1/\eps))$ total number of queries and additional running time.

\begin{theorem}
For the problem of maximizing a monotone submodular function subject to a cardinality constraint and any $\eps > 0$, there exists an $O(\log{n} / \eps^2)$-adaptive randomized algorithm which obtains a $1 - 1/e - \eps$ approximation with high probability. The number of function evaluations and additional running time of the algorithm are $O(n \; \mathrm{poly}(\log{n}, 1/\eps))$.
\end{theorem}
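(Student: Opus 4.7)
The plan is to realize the theorem by simulating the continuous greedy algorithm on the multilinear extension $F(x)=\E[f(R(x))]$ in a parallel, low-adaptivity fashion, and then rounding. Continuous greedy runs for time $1$ with step size $\eps$ (so $O(1/\eps)$ outer iterations), maintaining a fractional point $x\in[0,1]^n$ and, at each step, choosing a set $S_t$ of size $k$ approximately maximizing $\sum_{e\in S_t}\partial_e F(x)$, then updating $x\leftarrow x+(\eps/k)\mathbf{1}_{S_t}$. Standard analysis gives $F(x)\geq (1-1/e-O(\eps))\OPT$, and pipage/swap rounding converts the final $x$ into an integral feasible set without additional loss. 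The entire burden therefore reduces to performing each outer iteration in $O(\log n/\eps)$ rounds of adaptivity, which compounded with $O(1/\eps)$ iterations yields the desired $O(\log n/\eps^2)$ total adaptivity.

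To produce $S_t$ with few rounds, I would use a threshold-descending adaptive sampling primitive in the spirit of Balkanski--Singer, lifted to the multilinear setting. First, in one parallel round, estimate $\Delta:=\max_e \partial_e F(x)$ by sampling. Then iterate over $O(\log(k)/\eps)$ thresholds $\tau$ decreasing geometrically from $\Delta$ down to $\eps\Delta/k$. For each threshold, let $A$ be the set of elements whose current marginal under $F$ exceeds $\tau$. The subroutine either commits a large random batch $R\subseteq A$ to $S_t$, or shrinks $|A|$ by a constant factor: draw a uniformly random permutation of $A$, evaluate (via sampling of $F$) the marginal of a geometric sequence of prefixes in parallel, and binary-search for the largest prefix whose elements retain at least $(1-\eps)\tau$ marginal on average. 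By submodularity and a concentration argument over the permutation, such a prefix either absorbs most of $A$ or else at least a constant fraction of $A$ ceases to be a valid $\tau$-contributor, driving $O(\log n)$ rounds per threshold.

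The main obstacle is the concentration lemma for the adaptive sampling subroutine: one must show that for a random permutation of $A$ and a random choice of prefix length, conditional on the committed prefix the expected number of elements of $A$ that drop out of the $\tau$-threshold is a constant fraction of $|A|$. This is a combinatorial statement about submodular functions that uses the averaging property of marginals along a random permutation (submodularity gives that prefix contributions telescope) together with a Chernoff/martingale bound controlling deviations of $F$ evaluated by sampling. Importantly, since $F$ is accessed only through samples, one must also bound, via Hoeffding, the number of samples needed to estimate each $\partial_e F(x)$ to additive error $\eps\Delta/k$ with high probability uniformly over $n$ elements; this costs $\mathrm{poly}(\log n, 1/\eps)$ samples per element per round.

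Putting everything together, the adaptivity is $O(1/\eps)\cdot O(\log(k)/\eps)\cdot O(\log n)=O(\log n/\eps^2)$ after absorbing the threshold loop and binary search into a single geometric schedule (so the logs do not multiply). The query complexity is $n$ times the per-element sampling cost times the number of rounds, i.e.\ $O(n\cdot\mathrm{poly}(\log n,1/\eps))$. Finally, a standard $O(n\log n)$ swap-rounding step produces the integral output. A union bound over the polynomially many thresholds and iterations upgrades the per-round high-probability guarantees into an overall high-probability $1-1/e-\eps$ approximation, establishing the theorem.
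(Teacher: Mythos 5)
Your proposal takes a genuinely different route from the paper (continuous greedy on the multilinear extension plus threshold-descending sampling plus swap rounding, versus the paper's purely discrete ``add a random set or filter'' loop with a deterministic aggregate analysis), but it has a concrete gap exactly at the point the paper identifies as its main technical contribution: the query complexity of per-element marginal estimation.

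You propose to estimate each $\partial_e F(x)=\E[f(e\mid R(x))]$ to additive error $\eps\Delta/k$ via Hoeffding and claim this costs $\mathrm{poly}(\log n,1/\eps)$ samples per element per round. That bound is not justified. The single-sample random variable $f(e\mid R(x))$ ranges over an interval of width up to $f(\{e\})$, which can be $\Theta(\Delta)$ or larger, so Hoeffding requires on the order of $\bigl(f(\{e\})\,k/(\eps\Delta)\bigr)^2\log n=\Omega(k^2/\eps^2\cdot\log n)$ samples just to reach error $\eps\Delta/k$ --- and you in fact need precision $\Theta(\eps\tau)$ at each threshold $\tau$, which at your lowest threshold $\tau=\eps\Delta/k$ is even smaller. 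That is polynomial in $k$ (and hence in $n$), so the total query count becomes $n\cdot\mathrm{poly}(k)$ rather than $n\cdot\mathrm{poly}(\log n,1/\eps)$. This is precisely the bottleneck the paper discusses when contrasting itself with the estimate-every-marginal approach of Balkanski--Singer and Balkanski--Rubinstein--Singer: accurate per-element estimates of marginals as small as $\OPT/k$ intrinsically need polynomially many random sets. The paper's workaround is to never rely on accurate per-element estimates; it instead tracks $f\bigl((\OPT\cap U_t)\cup S\bigr)$ and proves a deterministic inequality (Claim~\ref{invariant}) showing that whatever value $\OPT$ loses during a filtering step is bounded by the (small) value of the sampled random sets themselves, which lets it use only $\mathrm{poly}(\log n,1/\eps)$ random sets total. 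Your proposal contains no analogue of this aggregate charging argument.

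Two secondary points also need work if you pursued this route: the ``concentration lemma'' controlling how many elements drop below threshold after committing a random prefix is stated but not proved, and it is the crux of the adaptivity bound; and the claim that the $O(1/\eps)\cdot O(\log k/\eps)\cdot O(\log n)$ product collapses to $O(\log n/\eps^2)$ by ``absorbing the threshold loop and binary search'' requires an explicit potential argument (e.g., that filtering steps shrink the candidate pool by a constant factor globally across thresholds) that is asserted but not given.
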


{\bf Comparison to \cite{BS18}.}
Let us briefly highlight some of the differences between our work and that of \cite{BS18} (see also Section~\ref{sec:techniques}). The algorithm of \cite{BS18} is based on the single threshold Greedy algorithm, whereas our algorithm is based on the standard Greedy algorithm that achieves the optimal approximation guarantee in the sequential setting. The number of rounds of adaptivity that we obtain matches that of \cite{BS18}, and it is optimal up to lower order terms. Another important point of departure in our algorithm and its analysis is in the running time and function evaluations in each round of adaptivity: we sample only a \emph{poly-logarithmic} number of random sets, and we evaluate the marginal gains only on these random sets, whereas the algorithm of \cite{BS18} uses a \emph{polynomial} number of random sets and it evaluates the marginal gains with respect to all of these random sets. This allows us to obtain an overall nearly-linear running time and function evaluations, which matches up to logarithmic factors the best running time that we can achieve in the sequential setting. The low number of queries as well as their structure make it possible to obtain improved running times for applications such as the ones discussed above.

\subsection{Our techniques}
\label{sec:techniques}

The starting point of our algorithm is the standard Greedy algorithm that achieves the optimal $1 - 1/e$ approximation in the sequential setting. The Greedy algorithm construct the solution sequentially over $k$ iterations, where each iteration adds the element with maximum marginal gain on top of the current solution. An important observation about the standard analysis of Greedy is that the only property that we use about the element selected in each iteration is that its gain $f(S \cup \{e\}) - f(S)$ is at least $\frac{1}{k} (f(\OPT) - f(S))$. Following \cite{BS18}, to achieve a low adaptivity, we want to add not just a single element but a much larger set of elements in each round of adaptivity. In contrast to \cite{BS18}, which based their approach on the single threshold Greedy algorithm, we draw inspiration from the standard Greedy algorithm and its analysis. Thus we aim to add a large set $R$ in each iteration whose density (ratio of gain to size) nearly matches the density of the optimal solution, i.e., we have $\frac{f(R \cup S) - f(S)}{|R|} \geq (1 - O(\eps)) \frac{f(\OPT) - f(S)}{k}$.

Perhaps surprisingly, we show that we can implement this strategy using $O(\ln{n})$ rounds of adaptivity, which matches up to lower order terms the hardness result of \cite{BS18}. The algorithm leverages the following dichotomy inspired by the work of \cite{BS18} and the earlier work on sample and prune of \cite{Kumar2013}. Suppose that we choose the set $R$ by sampling suitably many elements uniformly at random. If the expected density of the random set is almost as high as the target density of $\frac{f(\OPT) - f(S)}{k}$, then we can add the random set to our solution and gain as much as Greedy. On the other hand, if the density is low, we are guaranteed that a constant fraction of the elements have low expected marginal gain. Using this insight, we make progress by filtering the elements with low marginal gain. Since each filtering step removes a constant fraction of the elements, we ensure that we have only $O(\ln{n})$ rounds of adaptivity. At the same time, we are able to argue that the filtering steps preserve most of the value of $\OPT$, which is essential for obtaining the nearly-optimal approximation guarantee.

We also ensure that the overall running time and function evaluations of our algorithm is nearly-linear. This requires new insights and a different analysis from that of \cite{BS18}. In each adaptive round, the algorithm needs to estimate the expected gain of a random set and the expected marginal gain of each element by sampling sufficiently many random sets. The analysis of \cite{BS18} relies on having accurate estimates for the marginal gain of each element. Since some of the marginal gains can be very small (even among the elements of the optimal solution, most marginal gains might be as small as $\frac{1}{k}\cdot f(\OPT)$), it is necessary to sample a \emph{polynomial} number of random sets to ensure that every gain is estimated accurately enough. We take a very different approach that allows us to sample only a \emph{poly-logarithmic} number of random sets.

A key difficulty is to ensure that the filtering does not remove too much value from $\OPT$. The crucial insight here is the following. Evaluating the marginal gain over a common set (random or otherwise) can significantly decrease the marginal gains of elements, but the \emph{aggregate value} decreases by at most the value of the common set itself. Since we only filter when the random set has small value, even though some elements in $\OPT$ appear to have very small marginal gain with respect to the random set and they get filtered, the overall decrease in value can be charged to the random set and therefore it is low.

\medskip\noindent
{\bf Independent work.} Finally, we note two independent results with the same approximation and number of rounds of adaptivity~\cite{BRS18,FMZ18}. The algorithm of \cite{BRS18} is similar to ours. The main difference between the two algorithms is in the number of random sets used in each round of adaptivity. Our algorithm uses a poly-logarithmic number of random sets and has an overall nearly-linear running time. In contrast, the algorithm of \cite{BRS18} follows the approach of \cite{BS18} and uses a polynomial number of random sets and an analysis based on estimating the marginal values. As discussed above, we use a different analysis to handle the much smaller number of random sets. The algorithm of \cite{FMZ18} makes a linear number of queries in expectation.

\subsection{Preliminaries and notation}

Let $f: 2^V \rightarrow \R_{\geq 0}$ be a set function on a ground set $V$ of size $n = |V|$. The function is \emph{submodular} if $f(A) + f(B) \geq f(A \cap B) + f(A \cup B)$ for all subsets $A, B$. The function is \emph{monotone} if $f(A) \leq f(B)$ for all subsets $A, B$ satisfying $A \subseteq B$.

We consider the problem of maximizing a monotone submodular function subject to a cardinality constraint: find $S^* \in \arg\max_{S \subseteq V \colon |S| \leq k} f(S)$.

For any two sets $A$ and $B$, we use the notation $f(A | B)$ to denote the marginal gain of $A$ on top of $B$, i.e., $f(A | B) = f(A \cup B) - f(B)$. For an element $e \in V$, we use $f(e | B)$ as a shorthand for $f(\{e\} | B)$.

We will use the following Chernoff inequality, which follows from the standard Chernoff bound (see, e.g., \cite{dubhashi2009concentration}). 

\begin{theorem}[\cite{dubhashi2009concentration}]
\label{thm:chernoff}
Let $X_1, \dots, X_n$ be mutually independent and identically distributed random variables with $X_i \in [0, 1]$. Let $X = \frac{1}{n}\sum_{i = 1}^n X_i$. Suppose that $\E[X] \leq \mu_H$. Then, for every $0 < \eps < 1$, we have
  \[ \Pr[X > (1 + \eps) \mu_H] \leq \exp\left( - \frac{\eps^2}{3} n \mu_H \right).\]
\end{theorem}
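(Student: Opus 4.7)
The plan is to follow the classical Chernoff-bound derivation via the exponential moment method (a.k.a.\ Chernoff's trick), tuned so the final exponent takes the convenient form $-\eps^2 n \mu_H / 3$. The only genuine difficulty is not the probabilistic manipulation, which is routine, but the analytic inequality $(1+\eps)\ln(1+\eps) - \eps \geq \eps^2/3$ that converts the sharp Chernoff exponent into the form stated here for $\eps \in (0,1)$; everything else is a mechanical application of Markov and independence.

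First I would apply Markov's inequality to $e^{tS}$, where $S = \sum_{i=1}^n X_i = nX$ and $t > 0$ is a free parameter, obtaining
\[\Pr[X > (1+\eps)\mu_H] = \Pr[e^{tS} > e^{tn(1+\eps)\mu_H}] \leq e^{-tn(1+\eps)\mu_H}\, \E[e^{tS}].\]
Mutual independence of the $X_i$ (the identically distributed hypothesis is more than is needed) factorizes the moment generating function as $\E[e^{tS}] = \prod_{i=1}^n \E[e^{tX_i}]$.

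Next, since $X_i \in [0,1]$, convexity of $x \mapsto e^{tx}$ on $[0,1]$ gives $e^{tX_i} \leq 1 - X_i + X_i e^t = 1 + X_i(e^t - 1)$; taking expectations and using $1 + y \leq e^y$ yields $\E[e^{tX_i}] \leq \exp(\E[X_i](e^t - 1))$. Multiplying over $i$, using $\sum_i \E[X_i] = n\E[X] \leq n\mu_H$ together with $e^t - 1 > 0$, one obtains
\[\E[e^{tS}] \leq \exp\bigl(n\mu_H(e^t - 1)\bigr).\]
Combining with the Markov bound and choosing $t = \ln(1+\eps)$ (the minimizer of the resulting exponent in $t$) produces the sharp Chernoff inequality
\[\Pr[X > (1+\eps)\mu_H] \leq \exp\bigl(n\mu_H [\eps - (1+\eps)\ln(1+\eps)]\bigr).\]

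The remaining step is to show $(1+\eps)\ln(1+\eps) - \eps \geq \eps^2/3$ for $\eps \in (0,1)$. I would handle this by Taylor-expanding, which gives the alternating series
\[(1+\eps)\ln(1+\eps) - \eps = \sum_{k \geq 2} \frac{(-1)^k \eps^k}{k(k-1)} = \frac{\eps^2}{2} - \frac{\eps^3}{6} + \frac{\eps^4}{12} - \cdots.\]
Since the terms decrease in absolute value for $\eps \in (0,1]$, truncating after two terms yields the lower bound $(\eps^2/2) - (\eps^3/6) = \eps^2(3-\eps)/6 \geq \eps^2/3$ whenever $\eps \leq 1$. Substituting this into the exponent gives the claimed bound $\Pr[X > (1+\eps)\mu_H] \leq \exp(-\eps^2 n \mu_H / 3)$, completing the proof.
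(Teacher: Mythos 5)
The paper does not prove this statement: Theorem~\ref{thm:chernoff} is imported verbatim from \cite{dubhashi2009concentration} and used as a black box, so there is no in-paper proof for me to compare against. Your derivation is the standard exponential-moment (Chernoff) argument and it is correct: the Markov step, the factorization by mutual independence, the chord bound $e^{tX_i}\le 1+X_i(e^t-1)$ followed by $1+y\le e^y$, the substitution of $\E[X]\le\mu_H$ (valid because $e^t-1>0$), and the optimal choice $t=\ln(1+\eps)$ all go through, yielding $\exp\bigl(n\mu_H[\eps-(1+\eps)\ln(1+\eps)]\bigr)$. The only non-mechanical point is the inequality $(1+\eps)\ln(1+\eps)-\eps\ge\eps^2/3$ on $(0,1)$, and your alternating-series treatment is sound: the coefficients $1/(k(k-1))$ give strictly decreasing term magnitudes for $\eps\le 1$, so truncating after two terms is a lower bound, and $\eps^2/2-\eps^3/6=\eps^2(3-\eps)/6\ge\eps^2/3$. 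This is a complete and correct proof of the cited theorem; since the paper relies on the reference rather than re-deriving it, you have supplied a proof where none was given.
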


\section{Submodular Maximization Algorithm}

\begin{algorithm}[h]
\caption{The input is a submodular function $f: 2^V \rightarrow \R_{\geq 0}$ that is monotone and non-negative, a cardinality constraint $k$, and an error parameter $\eps$. }
\label{algorithm2}
\begin{algorithmic}[1]
\State For a set $U \subseteq V$ and an integer $\ell \leq |U|$, we let $\mathcal{U}(U, \ell)$ be the uniform distribution over subsets of $U$ of cardinality $\ell$
\State $M$ is an approximate optimal solution value: $M \le f(\mathrm{OPT}) \le (1+\eps)M$
\State $m=O( (\ln{n})^2 / \eps^4)$, $\ell = \eps^2 k/(100\ln n)$
\State $S = \emptyset$
\While{$f(S) \le (1-1/e-O(\eps))M$} \label{line:begin-phase}
  \State $U_1 = V\setminus S$ \Comment{Unfiltered elements}
  \State $t = 0$
  \State $Old = f(S)$
  \While{$f(S) - Old < \eps M/100$} \label{line:begin-iteration}
    \State $t \gets t+1$
    \State Let $R_1, \ldots, R_m$ be independent samples from $\mathcal{U}(U_t, \ell)$
    \State Let $R_{max} = \arg\max_{R\in \{R_1,\ldots, R_m\}} f(R|S)$
  \If{$f(R_{max}|S) \ge (1-10\eps)\frac{\ell}{k}(M-Old)$} \label{line:addR-branch}
    \State $S \gets S\cup R_{max}$ \label{line:add-R}
    \State $U_{t+1} = U_t\setminus R_{max}$
  \Else \label{line:filter-branch}
    \State Let $\{R_{i, j} \colon 1 \leq i \leq \Theta(\ln{n} / \eps), 1 \leq j \leq m\}$ be independent samples from $\mathcal{U}(U_t, \ell)$
    \State Let $v_{i, e} = \frac{1}{m} \sum_{j = 1}^m f(e|S\cup R_{i,j})$
    \State Let $avg_{i} = \frac{1}{m} \sum_{j = 1}^m f(R_{i,j} | S)$
    \State Let $i$ be s.t. $avg_i \leq (1 - 8\eps)\frac{\ell}{k} (M - Old)$ and $\sum_{e \in U_t} v_{i, e} \leq |U_t| (1 - 8\eps)\frac{M - Old}{k}$ \label{line:good-batch}
    \State If there is no such $i$, declare failure and terminate \label{line:fail}
    \State $U^-_t = \{e\in U_t: v_{i,e} < (1-7\eps) \frac{M-Old}{k}\}$ \label{line:filter}
    \State $U_{t+1} = U_t \setminus U^-_t$
  \EndIf
  \EndWhile
\EndWhile
\State \Return $S$
\end{algorithmic}
\end{algorithm}

The algorithm is given in Algorithm~\ref{algorithm2}. We assume that the algorithm has access to a value $M$ such that $M \leq f(\mathrm{OPT}) \leq (1+\eps)M$. An $n$-approximation to $f(\mathrm{OPT})$ is $M_0 = \max_{e\in V} f(\{e\})$. Given this value, we can try $2\eps^{-1}\ln n$ guesses for $M$: $M_0, (1+\eps)M_0, (1+\eps)^2 M_0, \ldots$ in parallel and return the best solution from all the guesses.

The algorithm builds the solution over $1/\eps$ phases, where each phase increases the solution value by $\Omega(\eps f(\OPT))$; a phase corresponds to a single iteration of the while loop on line~\ref{line:begin-phase}.  In each iteration of a phase (an iteration of the while loop on line~\ref{line:begin-iteration}), we aim to find a set with density $(1 - O(\eps)) \frac{f(\OPT) - Old}{k}$ and size $\Theta\left(\frac{\eps^2}{\ln{n}}\right) \cdot k$, where $Old$ is the value $f(S)$ of the solution at the beginning of the phase. If the expected density of a random set is at least the target density then, by sampling enough random sets, we can guarantee that with high probability we find a good set to add to our solution (see Lemma~\ref{lemma1}). On the other hand, if the expected density of a random set is below the target, then we can show that an $\eps$ fraction of the elements have expected marginal gain at most $1 + O(\eps)$ times the target density (see Lemma~\ref{lemmaR}). Thus, by sampling enough random sets, we can guarantee that with high probability we filter an $\eps$ fraction of the elements on line~\ref{line:filter}.

Another key issue is determining how many random sets we need to sample. Since we are filtering based on marginal gains, it is tempting to proceed by ensuring that each marginal gain is estimated to sufficient accuracy. Unfortunately, this requires sampling polynomially many random sets. To obtain a fast running time, we take a different approach that uses only a poly-logarithmic number of random sets. Since each random set has many elements, we can show using a Chernoff bound argument that our estimate of the expected value $\E_{R}[f(R|S)]$ of the random set is correct with high probability (see Lemma~\ref{lemma1}). When the expected value of the random set is small, it holds deterministically that the average of the expected marginal gains of the elements is small (see Claim~\ref{claim2}). This fact together with a straightforward application of Markov's inequality and the Chernoff inequality gives us that with high probability the algorithm executes the filtering step on line~\ref{line:filter} (see Lemma~\ref{lemma1}). This ensures that, when the expected value of the random set is low, we filter many elements with high probability. We also need to argue that we do not filter too much of the value of $\OPT$. Here we cannot rely on the marginal values being estimated accurately and we need a different analysis. A crucial insight is that we can analyze the loss on aggregate. A key idea is to track the value $f((\OPT \cap U_t) \cup S)$ of the optimal solution $\OPT \cap U_t$ that has survived the filtering steps so far. Instead of relying on having accurate marginal gains, we use a \emph{deterministic} analysis to bound the loss in the value $f((\OPT \cap U_t) \cup S)$ in a filtering round: since we are evaluating the marginal gains on common sets that have low value, even though some elements of $\OPT$ appear to have very small marginal gain with respect to these sets and they get filtered, the overall decrease in value is at most the value of the random sets themselves (see Claim~\ref{invariant}).  

\subsection{The analysis of Algorithm~\ref{algorithm2}}

We divide the execution of the algorithm into phases corresponding to the iterations of the outer while loop. We will show that the number of phases is bounded by $O(1/\eps)$ and the number of iterations of the inner while loop in each phase is $O(\ln{n}/\eps)$. Therefore, the total number of rounds of adaptivity is $O(\ln{n}/\eps^2)$.

Consider an iteration of the inner while loop (line~\ref{line:begin-iteration}). We show that if $\E_{R\sim \mathcal{U}(U,\ell)}[f(R|S)] \ge (1 - 9 \eps) \frac{\ell}{k} (M-Old)$ then with high probability, the algorithm executes line~\ref{line:add-R} (see Lemma~\ref{lemmaR}). On the other hand, if $\E_{R\sim \mathcal{U}(U,\ell)}[f(R|S)] < (1 - 9\eps) \frac{\ell}{k}(M-Old)$ then, with high probability, the elements filtered on line~\ref{line:filter} account for at least an $\eps$ fraction of all elements in $U$ (see Lemma~\ref{lemma1}).

\begin{lemma}
\label{lemmaR}
Consider an iteration $t$ with $\E_{R \sim \mathcal{U}(U_t, \ell)}[f(R|S)] \ge \frac{\ell}{k}(1-9\eps)(M-Old)$. With probability $1-1/n^2$, we have $f(R_{max}|S) \ge \frac{\ell}{k}(1-9\eps)(M-Old)$ and the algorithm executes line \ref{line:add-R}.
\end{lemma}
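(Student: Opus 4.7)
Plan: I would prove the lemma via the elementary inequality $f(R_{\max}|S)=\max_{j\in[m]} f(R_j|S)\ge \bar X:=\tfrac{1}{m}\sum_{j=1}^m f(R_j|S)$ combined with a concentration argument for $\bar X$ around its expectation $\mu:=\E_R[f(R|S)]\ge (1-9\eps)T$, where $T:=\tfrac{\ell}{k}(M-Old)$. If I can show $\bar X\ge (1-\eps)\mu$ with probability at least $1-1/n^2$ (or, with slightly sharper constants, $\bar X\ge (1-9\eps)T$), then $f(R_{\max}|S)$ exceeds the threshold $(1-10\eps)T$ of the if-condition on line~\ref{line:addR-branch}, so the algorithm executes line~\ref{line:add-R} as claimed.

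To set up the concentration, the first step is to bound each sample $X_j:=f(R_j|S)$ from above. By submodularity together with the guessing relation $M\ge f(\OPT)/(1+\eps)\ge \max_e f(\{e\})/(1+\eps)$, we have $f(R_j|S)\le \sum_{e\in R_j} f(\{e\})\le \ell\cdot \max_e f(\{e\})\le (1+\eps)\ell M=:C$, so each $X_j\in [0,C]$ with mean at least $(1-9\eps)\ell(M-Old)/k$. I would then apply the multiplicative Chernoff bound (Theorem~\ref{thm:chernoff}, used in its standard lower-tail form, obtained from the stated upper-tail bound in the usual way) to the normalized samples $X_j/C\in [0,1]$, whose expectation is at least $(1-9\eps)/((1+\eps)k)$, to conclude $\Pr[\bar X\le (1-\eps)\mu]\le \exp(-\Theta(\eps^2 m\mu/C))$. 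Inserting $m=\Theta((\ln n)^2/\eps^4)$ and $\ell=\eps^2 k/(100\ln n)$ from the algorithm, this exponent becomes $\Theta(\eps^2 m/k)$, which is at least $2\ln n$ in the regime $k\lesssim (\ln n)/\eps^2$.

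The main obstacle is to make the Chernoff exponent sufficiently large \emph{uniformly in $k$}: when $k\gg (\ln n)/\eps^2$, the crude range $C=(1+\eps)\ell M$ gives a mean-to-range ratio $\mu/C=\Theta(1/k)$ that is too small for the $m=\Theta((\ln n)^2/\eps^4)$ samples to suffice for a direct Chernoff application. The key technical ingredient needed to cover all $k$ is a sharper concentration inequality for $f(R|S)$ viewed as a monotone submodular function of the uniformly random size-$\ell$ subset $R$, exploiting the fact that such functions concentrate much more tightly than their worst-case Lipschitz bound would suggest (e.g., via self-bounding / Vondr\'ak-style arguments adapted to the uniform slice $\mathcal{U}(U_t,\ell)$). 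Effectively this replaces $C$ in the Chernoff denominator by the single-element Lipschitz constant $\Delta:=\max_e f(e|S)\le (1+\eps)M$, gaining a factor of $\ell=\Theta(\eps^2 k/\ln n)$ in the exponent and making it $\Theta(\eps^4 m/\ln n)\ge 2\ln n$ uniformly in $k$. Combining the resulting concentration of $\bar X$ with $f(R_{\max}|S)\ge \bar X$ then yields the claimed high-probability lower bound.
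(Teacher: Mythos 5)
Your high-level strategy — lower-bound $f(R_{\max}\mid S)$ by the empirical average $\bar X=\tfrac{1}{m}\sum_j f(R_j\mid S)$ and then concentrate $\bar X$ around $\mu=\E_R[f(R\mid S)]$ — is genuinely different from the paper's. The paper never uses concentration of the average here: it applies Markov's inequality to the nonnegative variable $(1+\eps)M-f(R\mid S)$ to show that a \emph{single} random set beats the threshold $(1-10\eps)\tfrac{\ell}{k}(M-Old)$ with probability $\Omega(\eps^3/\ln n)$, and then uses independence of the $m=\Theta(\ln^2 n/\eps^4)$ samples to boost this to $1-1/n^2$. Both routes can be made to work, but yours has a gap as written.

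The gap is in the upper bound you use on each sample $X_j=f(R_j\mid S)$. You bound $X_j\le\sum_{e\in R_j}f(\{e\})\le (1+\eps)\ell M$, which is loose by a factor of $\ell$, and this is exactly what kills your Chernoff exponent for large $k$ and prompts the appeal to a self-bounding concentration inequality. But that detour is unnecessary: since $|R_j|=\ell\le k$, the set $R_j$ is \emph{feasible}, so by submodularity and optimality $f(R_j\mid S)\le f(R_j)\le f(\OPT)\le (1+\eps)M$ — the same observation the paper makes explicitly. With $C=(1+\eps)M$ the normalized mean is $\mu/C=\Theta(\ell/k)\cdot\tfrac{M-Old}{M}=\Theta(\eps^2/\ln n)$ (using $M-Old\ge M/3$), and the lower-tail Chernoff exponent becomes $\Theta(\eps^2 m\mu/C)=\Theta(\eps^4 m/\ln n)=\Theta(\ln n)$ uniformly in $k$, which is exactly the bound you were trying to manufacture via self-bounding. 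In other words, the ``sharper concentration'' you identified as the key missing ingredient is not needed at all; the elementary bound $f(R\mid S)\le (1+\eps)M$ does the job. Moreover, the self-bounding step as you describe it (replacing the range by the single-element Lipschitz constant $\Delta=\max_e f(e\mid S)$) is only asserted, not proved, and would need a careful argument for the uniform-slice distribution $\mathcal{U}(U_t,\ell)$; fortunately you can drop it entirely. Once that correction is made, your approach is valid, though the paper's Markov argument is shorter and also yields a stronger failure probability of $\exp(-\Theta(\ln n/\eps))$ rather than $\exp(-\Theta(\ln n))$.
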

\begin{proof}
Using Markov's inequality, we will show that a given random set has a high value with probability at least $\Omega(\eps^3 / \ln{n})$. Since we are independently sampling $m = \Theta(\ln^2{n} /\eps^4)$ sets, at least one of the sets has a high value with high probability.

Note that, for a random set $R \sim \mathcal{U}(U_t, \ell)$, we have $0 \le f(R|S) \le f(\mathrm{OPT}) \le (1+\eps)M$: the first inequality follows by monotonicity, and the second inequality follows from the fact that $f(R | S) \leq f(R) \leq f(\mathrm{OPT})$, since $R$ is feasible. Since $(1 + \eps) M - f(R | S)$ is a non-negative random variable, it follows from Markov's inequality that
\begin{align*}
& \Pr_{R \sim \mathcal{U}(U_t, \ell)}\left[ (1 + \eps) M - f(R | S) > (1 + \eps) M - \frac{\ell}{k} (1 - 10\eps)(M - Old) \right]\\
&\quad \leq \frac{\E_{R \sim \mathcal{U}(U_t, \ell)} \left[ (1 + \eps) M - f(R | S)\right]}{(1 + \eps) M - \frac{\ell}{k} (1 - 10\eps)(M - Old)}\\
&\quad \leq \frac{ (1 + \eps) M - \frac{\ell}{k}(1-9\eps)(M-Old)}{(1 + \eps) M - \frac{\ell}{k} (1 - 10\eps)(M - Old)}\\
&\quad = 1 - \frac{\frac{\ell}{k} \eps (M-Old)}{(1 + \eps) M - \frac{\ell}{k} (1 - 10\eps)(M - Old)}\\
&\quad = 1 - \Theta\left(\frac{\eps^3}{\ln{n}} \right)
\end{align*}
The second inequality is our assumption, and the last equality follows from the fact that $\ell/k = \Theta(\eps^2 / \ln{n})$ and $(1/e + O(\eps)) M \leq M - Old \leq M$.

Therefore, with probability at least $\Omega(\eps^3 / \ln{n})$, we have $f(R|S) \geq \frac{\ell}{k} (1 - 10\eps)(M - Old)$. Since we independently sample $m = \Theta((\ln{n})^2 / \eps^4)$ sets, with probability at least $1-1/n^2$, we find a set $R_{max}$ such that $f(R_{max}|S) \ge \frac{\ell}{k}(1-10\eps)(M-Old)$.
\end{proof}

We now consider the case when the expected value of the random set is below the target and show that, with high probability, the algorithm filters many elements on line~\ref{line:filter}.

\begin{lemma}
\label{lemma1}
Consider an iteration $t$ in which $\E_{R \sim \mathcal{U}(U_t, \ell)}[f(R|S)] \le (1 - 9\eps) \frac{\ell}{k}(M-Old)$ and the algorithm executes line~\ref{line:filter-branch}. With probability at least $1 - 1/n^3$, the algorithm does not fail on line~\ref{line:fail}. Additionally, if the algorithm does not fail then $|U^-_t| \geq \eps |U_t|$.
\end{lemma}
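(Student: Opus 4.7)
The plan is to handle the two conclusions separately, starting with the deterministic one. Suppose the algorithm does not fail and picks an index $i$ satisfying the two conditions on line~\ref{line:good-batch}. If fewer than $\eps|U_t|$ elements had $v_{i,e}<(1-7\eps)(M-Old)/k$, the remaining $(1-\eps)|U_t|$ elements would each contribute at least $(1-7\eps)(M-Old)/k$ to the sum, giving
\[\sum_{e\in U_t} v_{i,e}\;\ge\;(1-\eps)(1-7\eps)|U_t|\tfrac{M-Old}{k}\;>\;(1-8\eps)|U_t|\tfrac{M-Old}{k},\]
contradicting condition~2. Hence $|U_t^-|\ge\eps|U_t|$ whenever the algorithm does not fail.

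For the probabilistic claim, the first step is to invoke Claim~\ref{claim2} (a deterministic submodularity/concavity statement that converts the hypothesis on $\E_R[f(R|S)]$ into an aggregate bound on marginals) to conclude $\sum_{e\in U_t}\E_R[f(e|S\cup R)]\le(1-9\eps)\tfrac{|U_t|}{k}(M-Old)$, under the assumed bound $\E_R[f(R|S)]\le(1-9\eps)\tfrac{\ell}{k}(M-Old)$. Both subsequent estimates are then reductions to this and the original hypothesis.

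Next, for a fixed batch $i$ I would bound the two conditions separately. For condition~1, I would apply the Chernoff bound (Theorem~\ref{thm:chernoff}) to the i.i.d.\ samples $f(R_{i,j}|S)/((1+\eps)M)\in[0,1]$; with $m=\Theta(\log^2 n/\eps^4)$ and $\mu_H=\Theta(\eps^2/\log n)$ (since $\ell/k=\Theta(\eps^2/\log n)$ and $M-Old=\Theta(M)$), the multiplicative slack $(1-8\eps)/(1-9\eps)=1+\Theta(\eps)$ yields failure probability at most $n^{-4}$. For condition~2, Markov's inequality applied to the non-negative random variable $\sum_{e\in U_t} v_{i,e}$, whose expectation is at most $(1-9\eps)|U_t|(M-Old)/k$, shows that it exceeds the threshold $(1-8\eps)|U_t|(M-Old)/k$ with probability at most $(1-9\eps)/(1-8\eps)$, so lies below with probability at least $\Omega(\eps)$. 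A union bound combines these into $\Pr[\text{batch $i$ is good}]\ge\Omega(\eps)-n^{-4}=\Omega(\eps)$. Since the $\Theta(\log n/\eps)$ batches use independent samples, the probability that none is good is at most $(1-\Omega(\eps))^{\Theta(\log n/\eps)}\le n^{-3}$, as required.

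The main obstacle is a bookkeeping one: setting the constants so that the Chernoff slack (tied to the gap between $1-9\eps$ and $1-8\eps$) and the Markov slack (also tied to that gap) are compatible with the sample sizes $m$ and the number of batches $\Theta(\log n/\eps)$, and in particular so that the per-batch success probability is $\Omega(\eps)$ rather than $o(\eps)$. The harder conceptual ingredient has already been separated out into Claim~\ref{claim2}; the probabilistic argument above uses it as a black box.
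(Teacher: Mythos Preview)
Your proposal is correct and follows essentially the same approach as the paper: the deterministic averaging argument for $|U_t^-|\ge\eps|U_t|$, the use of Claim~\ref{claim2} to pass from the hypothesis on $\E_R[f(R|S)]$ to an aggregate bound on marginals, Chernoff for condition~1, Markov for condition~2, and independence across the $\Theta(\log n/\eps)$ batches. The only differences from the paper are cosmetic (you do the deterministic part first and phrase it as a contrapositive; you apply Markov directly to the threshold $(1-8\eps)$ rather than to $\tfrac{1}{1-\eps}\E[X]$), and the constants you quote are within the $\Theta(\cdot)$ slack already present.
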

\begin{proof}
We first give an overview of the proof. In Claim~\ref{claim2}, we show that the expected density $\frac{\E_R[f(R|S)]}{|R|}$ of the random set is at least the average expected marginal gain $\frac{1}{|U_t|} \sum_{e \in U_t} \E_R[f(e | S \cup R)]$ of the elements. Note that this is a claim about expected values and thus it holds deterministically.  Using a Chernoff inequality (Theorem~\ref{thm:chernoff}), we show that, with high probability, we correctly determine that the expected value $\E_R[f(R|S)]$ of the random set is below the target. Additionally, we show that, with high probability, the algorithm succeeds to determine that the average marginal gain of the elements is low: by Claim~\ref{claim2} and Markov's inequality, a single random set succeeds with constant probability; since we independently sample poly-logarithmically many random sets, we obtain high probability overall. Thus, with high probability, the algorithm executes the filtering step on line~\ref{line:filter} and a straightforward averaging argument shows that it filters an $\eps$ fraction of the elements. 

\begin{claim}
\label{claim2}
We have
\[\E_{R\sim \mathcal{U}(U_t,\ell)}[f(R|S)] \geq \ell \cdot \frac{1}{|U_t|} \sum_{e \in U_t} \E[f(e|S\cup R)].\]
\end{claim}
\begin{proof}
Consider a random set $R$. Order the elements of $R$ arbitrarily as $e_1, e_2, \dots, e_{\ell}$ and let $R_0 = \emptyset$ and $R_i = \{e_1, \dots, e_i\}$. 
\[ f(R | S) = f(S \cup R) - f(S)
= \sum_{i = 1}^{\ell} (f(S \cup R_i) - f(S \cup R_{i - 1}))
= \sum_{i = 1}^{\ell} f(e_i | S \cup R_{i - 1})
\geq \sum_{i = 1}^{\ell} f(e_i | S \cup (R \setminus \{e_i\})) \]
where the last inequality follows from submodularity.

Therefore we have
\begin{align}
\E_{R\sim \mathcal{U}(U_t, \ell)}[f(R|S)] &\ge \E\left[\sum_{e\in R} f(e|S\cup (R\setminus \{e\})) \right] \notag\\
&=\sum_e \Pr[e\in R] \cdot \E[f(e|S\cup (R\setminus \{e\})) | e\in R] \notag\\
&=\sum_e \frac{\ell}{|U_t|} \cdot \E[f(e|S\cup (R\setminus \{e\})) | e\in R] \label{eq1}
\end{align}
Let us now show that, for every $e$, we have
\[ \E_{R \sim \mathcal{U}(U_t, \ell)}[f(e | S \cup (R \setminus \{e\})) | e \in R] \geq \E_{R \sim \mathcal{U}(U_t, \ell)}[f(e | S \cup (R \setminus \{e\})) | e \notin R]\]
To this end, note that we may assume that $R \sim \mathcal{U}(U_t, \ell)$ is generated by choosing a permutation $\pi$ of $U_t$ uniformly at random and letting $R = \{e_{\pi_1}, e_{\pi_2}, \dots, e_{\pi_{\ell}}\}$ be the first $\ell$ elements in this permutation. We have
\begin{align*}
\E_{R \sim \mathcal{U}(U_t, \ell)}[f(e|S\cup (R\setminus \{e\})) | e\in R]
&= \E_{R' \sim \mathcal{U}(U_t \setminus \{e\}, \ell - 1)}[f(e|S \cup R')] \\
&= \E_{R \sim \mathcal{U}(U_t, \ell)}[f(e | S \cup (R \setminus \{e, e_{\pi_1}\}) | e \notin R] \\
&\geq \E_{R \sim \mathcal{U}(U_t, \ell)}[f(e | S \cup (R \setminus \{e\}) | e \notin R]
\end{align*}
In the first equality, we have used that, if $R \sim \mathcal{U}(U_t, \ell)$ and $e \in R$, then $R \setminus \{e\}$ has the distribution $\mathcal{U}(U_t \setminus \{e\}, \ell - 1)$. In the second equality, we have used that, if $R \sim \mathcal{U}(U_t, \ell)$ and $e \notin R$, then $R \setminus \{e, e_{\pi_1}\}$ has the distribution $\mathcal{U}(U_t \setminus \{e\}, \ell - 1)$, since $e_{\pi_1}$ is an element of $R$. The inequality follows by submodularity.

Therefore
\begin{align}
& \E_{R \sim \mathcal{U}(U_t, \ell)}[f(e | S \cup (R \setminus \{e\}))] \notag\\
& \quad = \E[f(e | S \cup (R \setminus \{e\})) | e \in R] \Pr[e \in R] + \E[f(e | S \cup (R \setminus \{e\})) | e \notin R] \Pr[e \notin R] \notag\\
&\quad \leq \E[f(e | S \cup (R \setminus \{e\})) | e \in R] (\Pr[e \in R] + \Pr[e \notin R]) \notag\\
&\quad = \E[f(e | S \cup (R \setminus \{e\}) | e \in R] \label{eq2}
\end{align}
By combining (\ref{eq1}) and (\ref{eq2}), and using submodularity, we obtain
\[
\E_{R\sim \mathcal{U}(U_t,\ell)}[f(R|S)] \ge \sum_e \frac{\ell}{|U_t|} \cdot \E[f(e|S\cup (R\setminus \{e\}))] \ge \sum_e \frac{\ell}{|U_t|} \cdot \E[f(e|S\cup R)]
\]
\end{proof}

Let us now show that, with probability $1-1/n^3$, there is a batch of random sets $\{R_{i, j} \colon j \in [m]\}$ with the properties stated on line~\ref{line:good-batch}.

Fix a batch $i$. Using Theorem~\ref{thm:chernoff}, we can upper bound the probability of the event that $avg_i > (1 - 8\eps) \frac{\ell}{k} (M - Old)$ as follows. For each $j \in [m]$, let $X_j = \frac{1}{(1 + \eps)M} f(R_{i, j} | S) \in [0, 1]$. Let $X = \frac{1}{m} \sum_{j = 1}^m X_j$. By our assumption, we have
\[ \E_{R \sim \mathcal{U}(U_t, \ell)}[X_j] \leq \frac{1}{(1 + \eps) M }(1 - 9\eps) \frac{\ell}{k} (M - Old).\]
Let $\mu_H = \frac{1}{(1 + \eps) M} (1 - 9\eps) \frac{\ell}{k} (M - Old)$. By Theorem~\ref{thm:chernoff},
\[ \Pr[X > (1 + \eps) \mu_H] \leq \exp\left( - \frac{\eps^2}{3} m \mu_H \right) \leq \frac{1}{n^3},\]
where the second inequality follows by substituting $m$ and $\mu_H$, and using the fact that $(M - Old) / M = \Theta(1)$.

We now upper bound the probability of the event that $\sum_{e \in U_t} v_{i, e} > |U_t| (1 - 8\eps)\frac{M - Old}{k}$. By Markov's inequality, with probability at least $\eps$, we have
\begin{align*}
\frac{1}{m} \sum_{j = 1}^m \left(\sum_e f(e|S \cup R_{i,j}) \right) &\le \frac{1}{1-\eps} \E_{R\sim \mathcal{U}(U_t, \ell)}\left[ \sum_e f(e|S \cup R) \right]\\
&\le \frac{1}{1-\eps} \frac{|U_t|}{\ell} \E_{R\sim \mathcal{U}(U_t, \ell)}[f(R|S)]\\
&\le \frac{1}{1-\eps} \frac{|U_t|}{\ell} \frac{\ell}{k}(1-9\eps)(M-Old)\\
&\le |U_t|(1-8\eps)\frac{M-Old}{k}
\end{align*}
In the second inequality, we have used Claim~\ref{claim2}.

Therefore each batch $i$ satisfies both conditions of line~\ref{line:good-batch} with probability at least $\eps - 1/n^3$. Since the batches are independent, it follows that the probability that the algorithm fails on line~\ref{line:fail} is at most $(1 - \eps + 1/n^3)^{\Theta(\ln{n}/\eps)} \leq 1/n^3$.

Let us now condition on the event that the algorithm does not fail. Consider the set $U^-_t$ filtered on line~\ref{line:filter}. We have
\[ |U_t \setminus U^-_t| (1 - 7\eps) \frac{M - Old}{k}  \leq \sum_{e \in U_t} v_{i, e} \leq |U_t| (1 - 8\eps)\frac{M - Old}{k},\]
and thus $|U^-_t| \geq \eps |U_t|$.
\end{proof}

We now show that the number of phases is $O(1/\eps)$ and the number of iterations in each phase is $O(\ln{n}/\eps)$. The former simply follows from the fact that each phase increases the value of the solution by $\Omega(\eps f(\OPT))$. Most of the work is to show that the filtering steps do not remove too much of the optimal solution. A subtle but crucial choice is to track the value of the optimal solution $\OPT \cap U_t$ that has survived the filtering steps so far. In Claim~\ref{invariant}, we analyze how much this value decreases in each filtering iteration and show that this decrease can be charged to the value of the random sets, which have low value. Claim~\ref{invariant} then allows us to show that we cannot keep filtering without eventually finding a good set to add on line~\ref{line:add-R}: since each filtering iteration removes an $\eps$ fraction of the elements, after $O(\ln{n}/\eps)$ filtering iterations the ground set becomes empty; on the other hand, Claim~\ref{invariant} shows that $O(\ln{n}/\eps)$ filtering iterations is not enough to remove all of the value of $\OPT$, since $f(\OPT \cap U_t | S)$ is strictly positive (see Claim~\ref{filter-iterations}). 

\begin{lemma}
\label{lemma3}
Consider a phase of the algorithm. The phase increases $f(S)$ by $\Omega(\eps M)$. Additionally, with probability at least $1 - 1/n^2$, the phase has $O(\ln{n}/\eps)$ iterations.
\end{lemma}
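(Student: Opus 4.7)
The first assertion is immediate: the inner while loop on line~\ref{line:begin-iteration} only exits once $f(S) - Old \geq \eps M/100$, so upon termination of the phase the cumulative increase in $f(S)$ is at least $\eps M/100 = \Omega(\eps M)$. For the iteration bound I would partition the iterations of the inner loop into \emph{add} iterations (those taking the branch on line~\ref{line:addR-branch}) and \emph{filter} iterations (those taking the branch on line~\ref{line:filter-branch}), and bound each type separately.

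The add iterations admit a deterministic bound. Each one increases $f(S)$ by at least $(1-10\eps)\frac{\ell}{k}(M-Old)$. While the phase is in progress, the outer-loop condition $Old \leq (1-1/e-O(\eps))M$ together with $f(S) - Old < \eps M/100$ yields $M - Old = \Theta(M)$, so each add iteration contributes $\Theta(\ell M/k) = \Theta(\eps^2 M/\ln n)$ to $f(S)$. Since the cumulative increase is capped at $\eps M/100$ (after which the phase ends), at most $O(\ln n/\eps)$ add iterations can occur.

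The filter iterations are where randomness enters. I would fix a target $T = O(\ln n/\eps)$ with a sufficiently large constant and take a union bound of Lemmas~\ref{lemmaR} and~\ref{lemma1} over the first $T$ candidate iterations of the phase; this produces a $1-1/n^2$ event on which both lemmas hold throughout. On this event, whenever $\E_{R \sim \mathcal{U}(U_t,\ell)}[f(R|S)]$ meets the target density, Lemma~\ref{lemmaR} forces the add branch, so every filter iteration necessarily has expected density below the threshold and Lemma~\ref{lemma1} then guarantees $|U_t^-| \geq \eps|U_t|$. Thus after $T'$ filter iterations $|U_t| \leq (1-\eps)^{T'} n$. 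To preclude too many filter iterations, I would invoke Claim~\ref{invariant}, which tracks $f((\OPT \cap U_t) \cup S)$ across the phase and charges its decrease to the (small) total value of the random sets that triggered the filtering; it keeps $f((\OPT \cap U_t) | S)$ strictly positive, so $U_t$ is never empty. Combining this with the geometric decay forces $T' = O(\ln n/\eps)$, and together with the add-iteration count the phase terminates within $T$ iterations as claimed.

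The main obstacle is this last coupling with Claim~\ref{invariant}: without the aggregate-loss control it supplies, a long run of filter iterations could in principle shrink $U_t$ to the empty set while $f(S)$ is still far short of its phase target, and the phase could stall. Once that invariant is in place, the rest is a clean combination of Lemma~\ref{lemma1}'s geometric shrinking of $|U_t|$ with the elementary counting of add iterations, together with the union bound above.
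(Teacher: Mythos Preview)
Your proposal is correct and follows essentially the same route as the paper: split into add/gain versus filter iterations, bound the add iterations deterministically by the phase's value budget, and for the filter iterations combine the geometric shrinking of $|U_t|$ from Lemma~\ref{lemma1} with the invariant (Claim~\ref{invariant}) that keeps $f(\OPT\cap U_t\mid S)>0$ to derive a contradiction once too many filter iterations have occurred. The one minor difference is that you explicitly invoke Lemma~\ref{lemmaR} to certify that every filter iteration satisfies the low-expectation hypothesis of Lemma~\ref{lemma1} (the paper applies Lemma~\ref{lemma1}'s failure bound directly without spelling this out); just be aware that with Lemma~\ref{lemmaR}'s stated $1/n^2$ error, a union bound over $O(\ln n/\eps)$ iterations overshoots the target $1/n^2$ slightly, so you'd need to tighten the constant in $m$ to get the claimed probability.
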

\begin{proof}
The lower bound on the increase follows from the terminating condition for the phase. Thus it only remains to bound the number of iterations. We refer to each iteration as a gain iteration if line~\ref{line:addR-branch} is executed, and as a filtering iteration if line~\ref{line:filter-branch} is executed. We show that the number of gain iterations is $O(\ln{n}/\eps)$ with probability $1$, and the number of filter iterations is $O(\ln{n}/\eps)$ with probability at least $1 - 1/n^2$.

\begin{claim}
\label{gain-iterations}
The number of gain iterations is at most $6\ln{n}/\eps$.
\end{claim}
\begin{proof}
Each gain iteration increases $f(S)$ by $\frac{\ell}{k} (1 - 10\eps)(M - Old)$. Since $M - Old \geq M/3$ and the phase ends when $f(S) - Old$ becomes $\eps M / 100$, the number of gain iterations is at most $6\ln{n}/\eps$.
\end{proof}

Let us now consider the filtering iterations. Let $T$ be the minimum of $2\ln{n}/\eps + 1$ and the number of filtering iterations of the phase. By Lemma~\ref{lemma1}, the probability that none of the first $T$ filtering iterations fails is at least $1 - T/n^3 \geq 1 - 1/n^2$. In the following, we condition on the event that none of the first $T$ filtering iterations fails.

We can show the following invariant.

\begin{claim}
\label{invariant}
Consider an iteration $t$ and suppose that the number of filtering iterations so far is at most $T$. At the beginning of iteration $t$, we have
\[ f((\mathrm{OPT} \cap U_t)\cup S) \ge M - \frac{(t-1)\ell}{k} (M-Old) - \frac{|\mathrm{OPT}\setminus (U_t\cup S)|}{k}(1-7\eps)(M-Old). \]
\end{claim}
\begin{proof}
We will prove the invariant by induction on $t$. The invariant is true at the beginning of iteration $1$ since $f((\mathrm{OPT}\cap U_1)\cup S) \ge f(\mathrm{OPT}) \ge M$.

Consider iteration $t > 1$ and suppose the invariant holds at the beginning of iteration $t$. We will show that the invariant continues to hold at the beginning of iteration $t + 1$.

Suppose that iteration $t$ is a gain iteration. The algorithms adds the random set $R_{max}$ to $S$ on line~\ref{line:add-R} and we have
\[ f((\mathrm{OPT} \cap (U_{t}\setminus R_{max})) \cup S \cup R_{max}) \geq f((\mathrm{OPT} \cap U_t) \cup S),\]
since $(\mathrm{OPT} \cap (U_{t}\setminus R_{max})) \cup S\cup R_{max} \supseteq (\mathrm{OPT} \cap U_t) \cup S$ and $f$ is monotone. Thus the invariant continues to hold at the beginning of iteration $t+1$.

Therefore we may assume that iteration $t$ is a filtering iteration. Recall that we are conditioning on the event that the algorithm does not fail in the first $T$ filtering iterations, and thus iteration $t$ executes line~\ref{line:filter}. Let $i$ be the index satisfying the conditions on line~\ref{line:good-batch}. We have
\begin{align*}
& f(\mathrm{OPT} \cap U_{t + 1} | S)\\
&\quad \geq \frac{1}{m} \sum_{j = 1}^m f(\mathrm{OPT} \cap U_{t+1}| S \cup R_{i,j})\\
&\quad \geq \frac{1}{m} \sum_{j = 1}^m \Bigg(f(\mathrm{OPT} \cap U_{t} | S\cup R_{i,j}) - \sum_{e\in \mathrm{OPT} \cap U^-_t} f(e|S\cup R_{i,j}) \Bigg)\\
&\quad \ge f(\mathrm{OPT} \cap U_t | S) - \frac{1}{m} \sum_{j = 1}^m f(R_{i,j}|S) - |\mathrm{OPT}\cap U^-_t| (1-7\eps)\frac{1}{k}(M-Old)\\
&\quad \ge  f(\mathrm{OPT}\cap U_t|S) - \frac{\ell}{k}(M-Old) -|\mathrm{OPT}\cap U^-_t| (1-7\eps)\frac{1}{k}(M-Old)
\end{align*}
The first and second inequalities follow by submodularity. In the third inequality, we have used monotonicity to bound $f((\mathrm{OPT} \cap U_t) \cup S \cup R_{i,j}) \geq f((\mathrm{OPT} \cap U_t) \cup S)$, and we have used that $v_{i,e} \leq  (1-7\eps)\frac{1}{k}(M-Old)$ for all $e \in U^-_t$. In the fourth inequality, we have used that $avg_i \leq \frac{\ell}{k} (1 - 8\eps) (M - Old)$.

It follows that the invariant continues to hold at the beginning of iteration $t+1$.
\end{proof}

\begin{claim}
\label{filter-iterations}
The number of filtering iterations is at most $2\ln{n}/\eps$.
\end{claim}
\begin{proof}
Recall that we are conditioning on the event that the first $T$ filtering iterations do not fail. Suppose for contradiction that the number of filtering iterations reaches $2\ln{n}/\eps + 1$, and let $t$ be the iteration when this happens. By Lemma~\ref{lemma1}, each filtering iteration removes an $\eps$ fraction of $U$ and thus $U_t = \emptyset$. By Claim~\ref{gain-iterations}, the number of gain iterations is at most $6\ln{n}/\eps$ and thus $t \leq 8\ln{n}/\eps + 1$. By Claim~\ref{invariant}, at the beginning of iteration $t$, we have

\begin{align*}
& f((\mathrm{OPT} \cap U_t)\cup S) - f(S)\\
&\quad \geq M - \frac{(t-1)\ell}{k} (M-Old) - \frac{|\mathrm{OPT}\setminus (U_t\cup S)|}{k}(1-7\eps)(M-Old) - f(S)\\
&\quad \geq M - \frac{8\eps}{100} (M - Old) - (1 - 7\eps)(M - Old) - f(S)\\
&\quad = \left(7\eps - \frac{8\eps}{100}\right) (M - Old) - (f(S) - Old)\\
&\quad \geq \left(7\eps - \frac{8\eps}{100}\right) \cdot \frac{1}{3} M - \frac{\eps}{100} M\\
&\quad > 0
\end{align*}
In the second inequality, we used that $|\mathrm{OPT} \setminus (S_t \cup S)| \leq k$. In the third inequality, we used that $M - Old \geq (1/e + O(\eps)) M \geq 1/3 M$. In the last inequality, we used the fact that $f(S) - Old < \eps M /100$, since the phase has not ended.

It follows that $U_t$ is non-empty, which is a contradiction. Therefore the number of filtering phases is at most $2\ln{n}/\eps$.
\end{proof}

\end{proof}

\begin{lemma}
With probability $1-1/n$, the algorithm uses $O(\ln n / \eps^2)$ rounds of queries.
\end{lemma}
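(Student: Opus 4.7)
The plan is to decompose the total round count into (phases) $\times$ (iterations per phase) $\times$ (rounds per iteration), bound each factor, and apply a union bound over the random events that were already controlled in the earlier lemmas.

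First I would observe that every single iteration of the inner while loop (line~\ref{line:begin-iteration}) uses only $O(1)$ rounds of adaptive queries: the $m$ samples $R_1,\dots,R_m$ and their marginal evaluations $f(R_i\mid S)$ can all be queried in parallel in one round; if the gain branch on line~\ref{line:addR-branch} is taken, no further queries are needed; otherwise in the filtering branch, the $\Theta(\ln n/\eps)\cdot m$ sets $R_{i,j}$ together with all marginals $f(e\mid S\cup R_{i,j})$ for $e\in U_t$ are queried in parallel, using one additional round. So it suffices to count iterations.

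Next I would invoke Lemma~\ref{lemma3} twice. Each phase raises $f(S)$ by $\Omega(\eps M)$, and the outer loop terminates once $f(S)\ge (1-1/e-O(\eps))M$, so the number of phases is at most $O(1/\eps)$ deterministically. The same lemma asserts that, conditional on no algorithmic failure, a single phase uses at most $O(\ln n/\eps)$ iterations with probability at least $1-1/n^2$. Multiplying, the total number of iterations is $O(\ln n /\eps^2)$ on the good event, which yields the claimed $O(\ln n /\eps^2)$ rounds of adaptivity.

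Finally I would assemble the probability bound via a union bound. The only bad events are (i) a phase exceeds its iteration budget and (ii) some filtering iteration declares failure on line~\ref{line:fail}. By Lemma~\ref{lemma3}, event (i) has probability at most $1/n^2$ per phase, contributing $O(1/(\eps n^2))$ overall across the $O(1/\eps)$ phases. By Lemma~\ref{lemma1}, event (ii) occurs with probability at most $1/n^3$ per filtering iteration, and there are at most $O(\ln n/\eps^2)$ filtering iterations in total, contributing $O(\ln n/(\eps^2 n^3))$. Both terms are $o(1/n)$ for large $n$, so with probability at least $1-1/n$ the algorithm both avoids failure and completes within $O(\ln n/\eps^2)$ adaptive rounds. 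The only minor subtlety, and the step I would double-check, is that the $1/n^2$ failure bound in Lemma~\ref{lemma3} is already a union bound over the filtering iterations within a single phase, so composing it across phases as above does not double-count.
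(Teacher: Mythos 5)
Your proof is correct and takes essentially the same route as the paper: bound phases by $O(1/\eps)$, invoke Lemma~\ref{lemma3} for $O(\ln n/\eps)$ iterations per phase with probability $1-1/n^2$, and union bound over the $O(1/\eps)$ phases. The only difference is that you add a separate union-bound term for the filtering failures of Lemma~\ref{lemma1}; as you yourself suspect, this is redundant because the $1-1/n^2$ bound in Lemma~\ref{lemma3} is already obtained by conditioning on (and union-bounding over) the filtering failures within a phase, so the extra $O(\ln n/(\eps^2 n^3))$ term is harmless slack rather than a needed correction.
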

\begin{proof}
The lemma follows from Lemma~\ref{lemma3}. Each phase increases $f(S)$ by $\Omega(\eps M)$ and the algorithm stops when $f(S) \ge (1-1/e-O(\eps))M$. Thus, the number of phases is $O(1/\eps)$. With probability $1 - \frac{1}{\eps n^2} \geq 1 - \frac{1}{n}$, every phase uses $O(\ln{n}/\eps)$ rounds of queries. Therefore, with probability $1-1/n$, the total number of rounds of queries is $O(\ln{n} / \eps^2)$.
\end{proof}

Finally, we argue that the algorithm returns a feasible solution and achieves a $1-1/e-O(\eps)$ approximation ratio.

\begin{lemma}
The algorithm returns a feasible solution and achieves a $1 - 1/e - O(\eps)$ approximation.
\end{lemma}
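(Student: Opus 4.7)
The plan is to verify the two claims separately. The approximation guarantee is essentially a one-line consequence of the outer loop guard: once the while loop on line~\ref{line:begin-phase} exits, we have $f(S) \ge (1 - 1/e - O(\eps))M$, and since $M \ge f(\OPT)/(1+\eps)$, this gives $f(S) \ge (1 - 1/e - O(\eps))f(\OPT)$ after absorbing the $1/(1+\eps)$ factor into the $O(\eps)$ slack.

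The substantive part is feasibility. I plan to bound the total number $N$ of gain iterations (line~\ref{line:add-R}) across all phases, since every gain iteration adds exactly $\ell$ elements and therefore $|S| = N\ell$. Each gain iteration passes the threshold check on line~\ref{line:addR-branch} and so contributes at least $(1-10\eps)(\ell/k)(M - Old)$ to $f(S)$, where $Old$ denotes $f(S)$ at the start of the current phase. Because $f(S)$ is monotonically non-decreasing within a phase, $M - Old \ge M - f(S)$ holds at every iteration (so long as $f(S) \le M$, which is the regime before termination). Substituting yields the textbook batched-Greedy recurrence
\[
M - f(S^{\text{new}}) \;\le\; \Bigl(1 - (1-10\eps)\tfrac{\ell}{k}\Bigr)\bigl(M - f(S)\bigr),
\]
which, iterated over all $N$ gain iterations starting from $f(\emptyset)=0$ and combined with $1-x \le e^{-x}$, gives $M - f(S) \le M \exp\!\bigl(-N(1-10\eps)\ell/k\bigr)$.

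Prior to termination the outer loop guard forces $M - f(S) > (1/e + \Omega(\eps))M$, so the recurrence forces $N(1-10\eps)\ell/k \le 1 + O(\eps)$, i.e., $N\ell \le k(1+O(\eps))$. Hence $|S| \le k(1+O(\eps))$. To convert this relaxed bound into strict feasibility, I would run the algorithm with the slightly reduced budget $k' = k/(1 + c\eps)$ for an appropriate constant $c$: the resulting $|S| \le k'(1+O(\eps)) \le k$, and the optimal value under budget $k'$ differs from $f(\OPT)$ by at most a $1 - O(\eps)$ factor (by the standard subsampling argument for monotone submodular functions), which is absorbed into the $1-1/e-O(\eps)$ guarantee.

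The main obstacle I anticipate is precisely this last rescaling step that trades the multiplicative slack in $|S|$ for exact feasibility while preserving the approximation ratio; the per-iteration exponential decay itself is routine once one observes that monotonicity of $f(S)$ within a phase lets $Old$ be replaced by the current $f(S)$ in the lower bound on the gain.
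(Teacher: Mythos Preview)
Your approach is essentially the paper's: both derive the batched-Greedy recurrence $M - f(S) \le M\exp\bigl(-(1-O(\eps))|S|/k\bigr)$ from the threshold test on line~\ref{line:addR-branch} and then invert it using the outer loop guard. The only substantive difference is that you lose a sign in the inversion and then patch it with a rescaling trick that is not needed.

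Concretely, from $M - f(S) > (1/e + \Omega(\eps))M$ and the recurrence you get
\[
N(1-10\eps)\frac{\ell}{k} \;<\; -\ln\!\Bigl(\tfrac{1}{e} + \Omega(\eps)\Bigr) \;=\; 1 - \ln\!\bigl(1 + e\cdot\Omega(\eps)\bigr) \;=\; 1 - \Omega(\eps),
\]
not $\le 1 + O(\eps)$ as you wrote. Since the constant hidden in the $\Omega(\eps)$ on the right comes from the constant in the outer loop threshold $(1-1/e-O(\eps))M$, which the algorithm designer controls, one simply chooses that constant larger than the $10$ (the paper uses $11$ and $12$) so that $N\ell \le k\cdot\frac{1-\Omega(\eps)}{1-10\eps} \le (1-\eps)k$. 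Adding the final $\ell = \eps^2 k/(100\ln n)$ elements keeps $|S| < k$, exactly as the paper concludes. Your rescaling-to-$k'$ argument is a valid workaround, but it is superfluous once the arithmetic is done with the correct sign.
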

\begin{proof}
Consider iteration $j$ of a phase where the algorithm executes line~\ref{line:add-R}.
We have
\begin{align*}
f(S\cup R_{max})-f(S) &\ge |R_{max}|(1-10\eps)(M-Old)/k\\
&\ge |R_{max}|(1-11\eps)(M-f(S))/k
\end{align*}
Using the inequality above, we can show by induction that
\[ M - f(S) \leq \exp\left(- \frac{(1-11\eps)|S|}{k}\right) M.\]
Initially, $S = \emptyset$ and the inequality holds. Consider an iteration where the inequality holds at the beginning of the iteration. We have
\begin{align*}
M - f(S \cup R_{max})
&\leq (M - f(S)) \left(1 - \frac{|R_{max}| (1 - 11\eps)}{k} \right)\\
&\leq M \exp\left(- \frac{(1 - 11\eps)|S|}{k}\right) \exp\left(-\frac{(1 - 11\eps) |R_{max}|}{k} \right)\\
&= M \exp\left( - \frac{(1 - 11\eps)|S \cup R_{max}|}{k} \right)
\end{align*}
This completes the induction.

By the induction, in the first iteration where $f(S) \ge (1-\exp(-(1-12\eps)))M$ we must have $|S|\le (1-\eps)k +\ell < k$. Thus, the final solution satisfies the constraint $|S| \le k$ and has value $f(S) \ge (1-\exp(-(1-12\eps)))M$.
\end{proof}

\bibliographystyle{alpha}
\bibliography{submodular}

\newpage
\end{document}